\begin{document}

\FirstPageHeading{Bihlo&Popovych}

\ShortArticleName{Point symmetry group of the barotropic vorticity equation}

\ArticleName{Point Symmetry Group\\ of the Barotropic Vorticity Equation}

\Author{Alexander BIHLO~$^\dag$ and Roman O.~POPOVYCH~$^{\dag\ddag}$}
\AuthorNameForHeading{A.~Bihlo and R.O.~Popovych}
\AuthorNameForContents{Bihlo A.\ and POPOVYCH R.O.}
\ArticleNameForContents{Point symmetry group of the barotropic vorticity equation}

\Address{$^\dag$~Fakult\"at f\"ur Mathematik, Universit\"at Wien,\\
\hphantom{$^\ddag$}~Nordbergstra{\ss}e 15, A-1090 Wien, Austria}
\EmailD{alexander.bihlo@univie.ac.at, rop@imath.kiev.ua}

\Address{$^\ddag$~Institute of Mathematics of NAS of Ukraine,\\
\hphantom{$^\dag$}~3 Tereshchenkivska Str., Kyiv-4, Ukraine}

\Abstract{The complete point symmetry group of the barotropic vorticity equation on the $\beta$-plane is
computed using the direct method supplemented with two different techniques.
The first technique is based on the preservation of any megaideal of the maximal Lie invariance algebra
of a differential equation by the push-forwards of point symmetries of the same equation.
The second technique involves a priori knowledge on normalization properties of a class of differential equations
containing the equation under consideration.
Both of these techniques are briefly outlined.}

\section{Introduction}

It is well known that it is much easier to determine the continuous part of the complete point symmetry group of a differential equation than the entire group including discrete symmetries. The computation of continuous (Lie) symmetries is possible using infinitesimal techniques, which amounts to solving an overdetermined system of linear partial differential equations (referred to as \emph{determining equations}) for coefficients of vector fields generating one-parameter Lie symmetry groups. Owing to the algorithmic nature of this problem, the automatic computation of Lie symmetries is already implemented in a number of symbolic calculation packages, see, e.g., papers~\cite{Bihlo&Popovych:Carminati&Khai2000,Bihlo&Popovych:Head1993,Bihlo&Popovych:RochaFilho&Figueiredo2010}
for detail description of certain packages and
reviews~\cite{Bihlo&Popovych:Hereman1997,Bihlo&Popovych:Butcher&Carminati&Vu2003}.

The relative simplicity of finding Lie symmetries of differential equations is also a primary reason why the overwhelming part of research on symmetries is devoted to symmetries of this kind. See, e.g., the textbooks \cite{Bihlo&Popovych:Bluman&Cheviakov&Anco2010,Bihlo&Popovych:Bluman&Kumei1989,
Bihlo&Popovych:Meleshko2005,Bihlo&Popovych:Olver2000,Bihlo&Popovych:Ovsiannikov1982} for general theory and numerous examples and additionally the works \cite{Bihlo&Popovych:Andreev&Kaptsov&Pukhnachov&Rodionov1998,Bihlo&Popovych:Bihlo&Popovych2009a, Bihlo&Popovych:Bihlo&Popovych2009b,Bihlo&Popovych:Fushchych&Popovych1994,Bihlo&Popovych:Meleshko2004} for several applications of Lie methods in hydrodynamics and meteorology.

As continuous symmetries, also discrete symmetries are of practical relevance in a number of fields such as dynamical system theory, quantum mechanics, crystallography and solid state physics. They can also be helpful in some issues related to Lie symmetries, e.g.\ allowing for a simplification of optimal lists of inequivalent subalgebras, and due to enabling the construction of new solutions of differential equations from known ones. It is not possible, in general, to determine the whole point symmetry group in terms of finite transformations by usage of infinitesimal techniques. On the other hand, the direct computation of point symmetries based on their definition boils down to solving a cumbersome nonlinear system of determining equations, which is difficult to be integrated. Similar determining equations also arise under calculations of equivalence groups and sets of admissible transformations of classes of differential equations by means of employing the direct method. In order to simplify the derivation of the determining equations, different special techniques have been developed involving, in particular, the implicit representation of unknown functions, the combined splitting with respect to old and new variables and the inverse expression of old derivative via new ones~\cite{Bihlo&Popovych:Popovych&Bihlo2010,Bihlo&Popovych:Popovych&Kunzinger&Eshraghi2010,Bihlo&Popovych:Prokhorova2005}.

There exist two particular techniques that can be applied for \emph{a priori} simplification of calculations concerning the point symmetry groups of differential equations.

The first technique was presented in~\cite{Bihlo&Popovych:Hydon2000} for equations whose maximal Lie invariance algebras are finite dimensional.
It is based on the fact that the push-forwards of point symmetries of a given system of differential equations to vector fields on the space of dependent and independent variables are automorphisms of the maximal Lie invariance algebra of the same system.
This condition yields restrictions for those point transformations that can qualify as symmetries of the system of differential equations under consideration. We will adopt this technique to the infinite dimensional case using the notion of megaideals of Lie algebras, which are the most invariant algebraic structures.

The second technique involves available information on the set of admissible transformations of a class of differential equations~\cite{Bihlo&Popovych:Popovych&Kunzinger&Eshraghi2010}, which contains the investigated equation.

In the present paper, we will demonstrate both of these techniques by computing the complete point symmetry group of the barotropic vorticity equation on the $\beta$-plane. This is one of the most classical models which are used in geophysical fluid dynamics. The techniques to be employed are briefly described in Section~\ref{Bihlo&Popovych:sec:Techniques}. The actual computations using the method based on the corresponding Lie invariance algebra and that involving a priori knowledge on admissible transformations of a class of generalized vorticity equations are presented in Section~\ref{Bihlo&Popovych:sec:CalculationsInvarianceAlgebra} and~\ref{Bihlo&Popovych:sec:DirectMethodForBVE}, respectively. A short summary concludes the paper. %~\ref{sec:Conclusion}.

\section{Techniques of calculation \\ of complete point symmetry groups}\label{Bihlo&Popovych:sec:Techniques}

Both the techniques described in this section should be considered merely as tools for deriving preliminary restrictions on point symmetries.
In either case, calculations must finally be carried out within the framework of the direct approach.

\subsection{Using megaideals of Lie invariance algebra}\label{Bihlo&Popovych:sec:TechniquesLieInvarianceAlgebra}

The most refined version of the technique involving Lie symmetries in the calculations of complete point symmetry groups was applied in~\cite{Bihlo&Popovych:Hydon2000}.
It is outlined as follows:
Given a system of differential equations~$\mathcal L$ whose maximal Lie invariance algebra $\mathfrak g$ is $n$-dimensional with a basis $\{e_1,\dots,e_n\}$, $n<\infty$, one has to compute the entire automorphism group of $\mathfrak g$, $\mathrm{Aut}(\mathfrak g)$. Supposing that $\mathcal T$ is a transformation from the complete point symmetry group~$G$ of~$\mathcal L$, one has the condition $\mathcal T_* e_j=\sum_{i=1}^n e_ia_{ij}$ for $j=1,\dots,n$, where $\mathcal T_*$ denotes the push-forward of vector fields induced by $\mathcal T$ and $(a_{ij})$ is the matrix of an automorphism of $\mathfrak g$ in the chosen basis. This condition implies constraints on the transformation $\mathcal T$ which are then taken into account in further calculations with the direct method.

The method we propose here is different to those described in the previous paragraph. In fact, it uses only the minimal information on the automorphism group $\mathrm{Aut}(\mathfrak g)$ in the form of a set of megaideals of $\mathfrak g$. Due to this, it is applicable also in the case when the maximal Lie invariance algebra is infinite dimensional. The notion of megaideals was introduced in~\cite{Bihlo&Popovych:Popovych&Boyko&Nesterenko&Lutfullin2005}.
\begin{definition}
A \emph{megaideal} $\mathfrak i$ is a vector subspace of $\mathfrak g$ that is invariant under any transformation from the automorphism group $\mathrm{Aut}(\mathfrak g)$ of $\mathfrak g$.
\end{definition}
That is, we have $\mathfrak T \mathfrak i=\mathfrak i$ for a megaideal~$\mathfrak i$ of~$\mathfrak g$, whenever $\mathfrak T$ is a transformation from $\mathrm{Aut}(\mathfrak g)$. Any megaideal of~$\mathfrak g$ is an ideal and characteristic ideal of~$\mathfrak g$.
Both the improper subalgebras of~$\mathfrak g$ (the zero subspace and $\mathfrak g$ itself) are megaideals of~$\mathfrak g$.
The following assertions are obvious.

\begin{proposition}\label{prop:OnMegaIdeals1}
If $\mathfrak i_1$ and $\mathfrak i_2$ are megaideals of~$\mathfrak g$ then so are $\mathfrak i_1+\mathfrak i_2,$ $\mathfrak i_1\cap \mathfrak i_2$ and $[\mathfrak i_1,\mathfrak i_2]$, i.e., sums, intersections and Lie products of megaideals are again megaideals.
\end{proposition}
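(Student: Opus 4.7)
The plan is to verify invariance of each of the three constructions under an arbitrary automorphism $\mathfrak{T}\in\mathrm{Aut}(\mathfrak{g})$, exploiting only the fact that $\mathfrak{T}$ is a linear bijection that preserves the Lie bracket. Fix megaideals $\mathfrak{i}_1,\mathfrak{i}_2$ of $\mathfrak{g}$; by hypothesis $\mathfrak{T}\mathfrak{i}_1=\mathfrak{i}_1$ and $\mathfrak{T}\mathfrak{i}_2=\mathfrak{i}_2$ for every such $\mathfrak{T}$. I then want to show $\mathfrak{T}(\mathfrak{i}_1+\mathfrak{i}_2)=\mathfrak{i}_1+\mathfrak{i}_2$, $\mathfrak{T}(\mathfrak{i}_1\cap\mathfrak{i}_2)=\mathfrak{i}_1\cap\mathfrak{i}_2$ and $\mathfrak{T}[\mathfrak{i}_1,\mathfrak{i}_2]=[\mathfrak{i}_1,\mathfrak{i}_2]$.

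For the sum, linearity of $\mathfrak{T}$ gives $\mathfrak{T}(\mathfrak{i}_1+\mathfrak{i}_2)=\mathfrak{T}\mathfrak{i}_1+\mathfrak{T}\mathfrak{i}_2$, which equals $\mathfrak{i}_1+\mathfrak{i}_2$ by assumption. For the intersection, the inclusion $\mathfrak{T}(\mathfrak{i}_1\cap\mathfrak{i}_2)\subseteq\mathfrak{T}\mathfrak{i}_1\cap\mathfrak{T}\mathfrak{i}_2=\mathfrak{i}_1\cap\mathfrak{i}_2$ is immediate; the opposite inclusion follows by applying the same reasoning to $\mathfrak{T}^{-1}\in\mathrm{Aut}(\mathfrak{g})$, which yields $\mathfrak{i}_1\cap\mathfrak{i}_2=\mathfrak{T}\mathfrak{T}^{-1}(\mathfrak{i}_1\cap\mathfrak{i}_2)\subseteq\mathfrak{T}(\mathfrak{i}_1\cap\mathfrak{i}_2)$. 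For the Lie product, the bracket-preserving property of $\mathfrak{T}$ gives $\mathfrak{T}[x,y]=[\mathfrak{T}x,\mathfrak{T}y]$ on generators of the spanning set of $[\mathfrak{i}_1,\mathfrak{i}_2]$; extending by linearity produces $\mathfrak{T}[\mathfrak{i}_1,\mathfrak{i}_2]=[\mathfrak{T}\mathfrak{i}_1,\mathfrak{T}\mathfrak{i}_2]=[\mathfrak{i}_1,\mathfrak{i}_2]$.

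Since $\mathfrak{T}\in\mathrm{Aut}(\mathfrak{g})$ was arbitrary, each of the three subspaces is preserved by every automorphism of $\mathfrak{g}$, and each is therefore a megaideal by definition. There is no real obstacle here: the argument is purely formal and relies only on the defining properties of Lie algebra automorphisms. The only small subtlety worth spelling out is the use of $\mathfrak{T}^{-1}$ to obtain the nontrivial inclusion in the intersection case, which is why the statement is phrased for the full automorphism group rather than merely for an endomorphism semigroup.
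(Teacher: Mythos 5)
Your proof is correct and is exactly the routine verification intended: the paper itself offers no written proof, stating only that ``the following assertions are obvious.'' All three checks are sound --- linearity of $\mathfrak T$ for the sum, the passage to $\mathfrak T^{-1}\in\mathrm{Aut}(\mathfrak g)$ for the reverse inclusion in the intersection, and bracket-preservation together with linearity on the spanning set $\{[x,y]\mid x\in\mathfrak i_1,\,y\in\mathfrak i_2\}$ for the Lie product --- and you rightly note that invertibility of automorphisms is the one point where the argument genuinely uses more than the semigroup structure.
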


\begin{proposition}\label{prop:OnMegaIdeals2}
If $\mathfrak i_2$ is a megaideal of $\mathfrak i_1$ and $\mathfrak i_1$ is a megaideal of $\mathfrak g$ then $\mathfrak i_2$ is a megaideal of $\mathfrak g$, i.e., megaideals of megaideals are also megaideals.
\end{proposition}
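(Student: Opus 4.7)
The plan is to unfold the two megaideal hypotheses directly, using the definition: a subspace $\mathfrak i\subseteq\mathfrak g$ is a megaideal of $\mathfrak g$ exactly when $\mathfrak T(\mathfrak i)=\mathfrak i$ for every $\mathfrak T\in\mathrm{Aut}(\mathfrak g)$. Fix an arbitrary $\mathfrak T\in\mathrm{Aut}(\mathfrak g)$; the goal is to show $\mathfrak T(\mathfrak i_2)=\mathfrak i_2$.

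First I would use the assumption that $\mathfrak i_1$ is a megaideal of $\mathfrak g$ to conclude $\mathfrak T(\mathfrak i_1)=\mathfrak i_1$. This means the restriction $\mathfrak T|_{\mathfrak i_1}$ maps $\mathfrak i_1$ to itself. Since $\mathfrak T$ is a linear bijection that preserves the Lie bracket on all of $\mathfrak g$, its restriction to the invariant subalgebra $\mathfrak i_1$ is a linear bijection of $\mathfrak i_1$ preserving the induced bracket, i.e., $\mathfrak T|_{\mathfrak i_1}\in\mathrm{Aut}(\mathfrak i_1)$.

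Next I would invoke the assumption that $\mathfrak i_2$ is a megaideal of $\mathfrak i_1$: by definition, every element of $\mathrm{Aut}(\mathfrak i_1)$ preserves $\mathfrak i_2$. Applying this to $\mathfrak T|_{\mathfrak i_1}$ yields $\mathfrak T(\mathfrak i_2)=\mathfrak i_2$. Because $\mathfrak T\in\mathrm{Aut}(\mathfrak g)$ was arbitrary, $\mathfrak i_2$ is invariant under $\mathrm{Aut}(\mathfrak g)$ and is therefore a megaideal of $\mathfrak g$.

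The only mildly nontrivial point is the intermediate claim that the restriction of an ambient automorphism to a megaideal is genuinely an automorphism of that megaideal; but this is immediate from invariance together with the bijectivity and bracket preservation of $\mathfrak T$, so there is no real obstacle. The rest is a one-line chaining of the two hypotheses.
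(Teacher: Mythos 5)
Your proof is correct, and it is precisely the standard argument the paper has in mind: the paper states this proposition without proof (introducing it among assertions declared ``obvious''), and your chain --- invariance of $\mathfrak i_1$ makes $\mathfrak T|_{\mathfrak i_1}$ an element of $\mathrm{Aut}(\mathfrak i_1)$, which then fixes $\mathfrak i_2$ setwise --- is the natural filling-in of that omission. You also rightly isolate and verify the one nontrivial step, that the restriction of an ambient automorphism to an invariant subalgebra is genuinely an automorphism of it (note this only uses that restrictions of ambient automorphisms form a subgroup of $\mathrm{Aut}(\mathfrak i_1)$, so the megaideal hypothesis on $\mathfrak i_2$ is even slightly stronger than needed), so there is nothing to add.
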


\begin{corollary}\label{cor:OnMegaIdeals3}
All elements of the derived, upper and lower central series of a Lie algebra are its megaideals. In particular, the center and the derivative of a Lie algebra are its megaideals.
\end{corollary}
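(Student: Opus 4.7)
The plan is to prove the three series separately, exploiting Proposition~\ref{prop:OnMegaIdeals1} where possible and otherwise reasoning directly from the definition of a megaideal. In all three cases the argument is by induction on the index, with the base case being the trivial megaideal $\mathfrak g$ or $\{0\}$.

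For the derived series $\mathfrak g^{(0)}=\mathfrak g$, $\mathfrak g^{(k+1)}=[\mathfrak g^{(k)},\mathfrak g^{(k)}]$, I would proceed by induction on $k$. The base case is immediate since $\mathfrak g$ itself is a megaideal. For the inductive step, assuming $\mathfrak g^{(k)}$ is a megaideal of $\mathfrak g$, Proposition~\ref{prop:OnMegaIdeals1} applied to $\mathfrak i_1=\mathfrak i_2=\mathfrak g^{(k)}$ yields that $\mathfrak g^{(k+1)}=[\mathfrak g^{(k)},\mathfrak g^{(k)}]$ is a megaideal. The lower central series $\mathfrak g^1=\mathfrak g$, $\mathfrak g^{k+1}=[\mathfrak g,\mathfrak g^k]$ is handled identically, applying Proposition~\ref{prop:OnMegaIdeals1} to the pair $(\mathfrak g,\mathfrak g^k)$. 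In particular, the derivative $[\mathfrak g,\mathfrak g]=\mathfrak g^{(1)}=\mathfrak g^2$ appears in both series and is therefore a megaideal.

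The upper central series $\mathfrak z_0=\{0\}$, $\mathfrak z_{k+1}=\{x\in\mathfrak g\mid [x,\mathfrak g]\subseteq\mathfrak z_k\}$ is the delicate case, since its terms are not presented as sums, intersections, or Lie products of previously constructed megaideals, so Proposition~\ref{prop:OnMegaIdeals1} does not apply. Here I would go back to the definition of a megaideal. Assume inductively that $\mathfrak z_k$ is a megaideal; the base case $\mathfrak z_0=\{0\}$ is trivially invariant under every automorphism. Let $\mathfrak T\in\mathrm{Aut}(\mathfrak g)$ and $x\in\mathfrak z_{k+1}$. For an arbitrary $y\in\mathfrak g$, write $y=\mathfrak T y'$ with $y'=\mathfrak T^{-1}y\in\mathfrak g$ and compute
\begin{equation*}
[\mathfrak T x,y]=[\mathfrak T x,\mathfrak T y']=\mathfrak T[x,y']\in\mathfrak T\mathfrak z_k=\mathfrak z_k,
\end{equation*}
where the last equality uses the inductive hypothesis. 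Hence $\mathfrak T x\in\mathfrak z_{k+1}$, showing $\mathfrak T\mathfrak z_{k+1}\subseteq\mathfrak z_{k+1}$; applying the same argument to $\mathfrak T^{-1}$ gives equality. The center $Z(\mathfrak g)=\mathfrak z_1$ is the $k=0$ instance of this argument.

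The only real obstacle is recognizing that Proposition~\ref{prop:OnMegaIdeals1} is insufficient for the upper central series and that one must instead invoke the definition directly; the rest is bookkeeping. One could alternatively deduce the megaideal property of $\mathfrak z_{k+1}$ from the observation that centralizers of megaideals in $\mathfrak g$ are megaideals (by the same $\mathfrak T[x,y']=[\mathfrak T x,y]$ calculation), but the inductive description above is shorter and self-contained.
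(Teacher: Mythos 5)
Your proof is correct and matches the paper's intent: the paper states this corollary without proof (it is introduced by ``The following assertions are obvious''), and the natural argument is exactly yours---induction via Proposition~\ref{prop:OnMegaIdeals1} for the derived and lower central series, plus a direct definition-based invariance argument for the upper central series, whose terms are not expressible as sums, intersections or Lie products of previously obtained megaideals. Your computation $[\mathfrak T x,y]=\mathfrak T[x,\mathfrak T^{-1}y]\in\mathfrak T\mathfrak z_k=\mathfrak z_k$, together with applying the same reasoning to $\mathfrak T^{-1}$ to upgrade the inclusion $\mathfrak T\mathfrak z_{k+1}\subseteq\mathfrak z_{k+1}$ to equality, correctly handles the one genuinely non-mechanical step.
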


\begin{corollary}\label{cor:OnMegaIdeals4}\looseness=-1
The radical~$\mathfrak r$ and nil-radical~$\mathfrak n$ (i.e., the maximal solvable and nilpotent ideals, respectively) of~$\mathfrak g$
as well as different Lie products, sums and intersections involving~$\mathfrak g$, $\mathfrak r$ and~$\mathfrak n$
($[\mathfrak g,\mathfrak r]$, $[\mathfrak r,\mathfrak r]$, $[\mathfrak g,\mathfrak n]$, $[\mathfrak r,\mathfrak n]$,  $[\mathfrak n,\mathfrak n]$, etc.) are megaideals of~$\mathfrak g$.
\end{corollary}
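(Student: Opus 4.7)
The plan is to reduce everything to showing that $\mathfrak r$ and $\mathfrak n$ themselves are megaideals; once that is in hand, all the listed sums, intersections, and Lie products are megaideals by iterated application of Proposition~\ref{prop:OnMegaIdeals1} (with $\mathfrak g$ being trivially a megaideal of itself). So the real content is to verify that the radical and the nil-radical are invariant under every $\varphi\in\mathrm{Aut}(\mathfrak g)$.

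For the radical, I would argue as follows. Let $\varphi\in\mathrm{Aut}(\mathfrak g)$. Since $\varphi$ is a Lie algebra isomorphism, it maps ideals to ideals and commutes with the formation of the derived series: $\varphi(\mathfrak r^{(k)})=\varphi(\mathfrak r)^{(k)}$ for all $k\geq 0$. Because $\mathfrak r$ is solvable, its derived series terminates at $\{0\}$, hence so does that of $\varphi(\mathfrak r)$. Thus $\varphi(\mathfrak r)$ is a solvable ideal of $\mathfrak g$, and by maximality of $\mathfrak r$ we obtain $\varphi(\mathfrak r)\subseteq\mathfrak r$. Applying the same inclusion to $\varphi^{-1}$ yields the reverse inclusion, so $\varphi(\mathfrak r)=\mathfrak r$. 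This proves $\mathfrak r$ is a megaideal.

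For $\mathfrak n$ I would run the identical argument with the lower central series in place of the derived series: $\varphi$ commutes with iterated Lie brackets, so nilpotency of $\mathfrak n$ implies nilpotency of $\varphi(\mathfrak n)$, and maximality together with invertibility of $\varphi$ again forces $\varphi(\mathfrak n)=\mathfrak n$. Finally, since $\mathfrak g$, $\mathfrak r$, and $\mathfrak n$ are all megaideals, Proposition~\ref{prop:OnMegaIdeals1} implies that any expression built from them using finitely many sums, intersections, and Lie brackets (such as $[\mathfrak g,\mathfrak r]$, $[\mathfrak r,\mathfrak r]$, $[\mathfrak g,\mathfrak n]$, $[\mathfrak r,\mathfrak n]$, $[\mathfrak n,\mathfrak n]$, and so on) is again a megaideal of $\mathfrak g$.

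I do not anticipate a genuine obstacle here; the only step requiring care is the maximality/invertibility argument that upgrades the inclusion $\varphi(\mathfrak r)\subseteq\mathfrak r$ (respectively for $\mathfrak n$) to an equality, which is what is needed for megaideal status rather than mere characteristic-ideal status. Everything else is formal manipulation of the derived and lower central series under a Lie algebra isomorphism.
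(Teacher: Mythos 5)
Your proof is correct and is exactly the argument the paper leaves implicit: the corollary is stated there without proof, as an evident consequence of the facts that automorphisms map solvable (resp.\ nilpotent) ideals to solvable (resp.\ nilpotent) ideals and that, by Proposition~\ref{prop:OnMegaIdeals1}, sums, intersections and Lie products of megaideals are megaideals. The only step worth polishing is your appeal to ``maximality'': to deduce $\varphi(\mathfrak r)\subseteq\mathfrak r$ (resp.\ $\varphi(\mathfrak n)\subseteq\mathfrak n$) you need that $\mathfrak r$ (resp.\ $\mathfrak n$) contains \emph{every} solvable (resp.\ nilpotent) ideal, which holds because the sum of two solvable (resp.\ nilpotent) ideals is again solvable (resp.\ nilpotent) --- a merely inclusion-maximal solvable ideal would not immediately give this containment, though, as you note, invertibility of $\varphi$ then upgrades it to the equality required for megaideal status.
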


Suppose that $\mathfrak g$ is finite dimensional and possesses a megaideal $\mathfrak i$ which, without loss of generality, can be assumed to be spanned by the first $k$ basis elements, $\mathfrak i=\langle e_1,\dots,e_k\rangle$. Then the matrix $(a_{ij})$ of any automorphism of $\mathfrak g$ has block structure, namely, $a_{ij}=0$ for $i>k$. In other words, in the finite dimensional case we take into account only the block structure of automorphism matrices. This is reasonable as the entire automorphism group $\mathrm{Aut}(\mathfrak g)$ (which should be computed within the method from~\cite{Bihlo&Popovych:Hydon2000}) may be much wider than the group of automorphisms of $\mathfrak g$ induced by elements of the point symmetry group~$G$ of~$\mathcal L$. Moreover, it seems difficult to find the entire group $\mathrm{Aut}(\mathfrak g)$ if the algebra~$\mathfrak g$ is infinite dimensional. At the same time, in view of the above assertions it is easy to determine a set of megaideals for any Lie algebra.

\subsection{Direct method and admissible transformations}\label{Bihlo&Popovych:sec:DirectMethodAndAdmTrans}

The initial point of the second technique is to consider a given $p$th order system~$\mathcal L^0$ of $l$~differential equations
for $m$~unknown functions $u=(u^1,\ldots,u^m)$ of $n$~independent variables $x=(x_1,\ldots,x_n)$
as an element of a class~$\mathcal L|_{\mathcal S}$ of similar systems~$\mathcal L_\theta$: $\smash{L(x,u_{(p)},\theta(x,u_{(p)}))=0}$
parameterized by a tuple of $p$th order differential functions (arbitrary elements)~$\theta=(\theta^1(x,u_{(p)}),\ldots,\theta^k(x,u_{(p)}))$.
Here $u_{(p)}$ denotes the set of all the derivatives of~$u$ with respect to $x$
of order not greater than~$p$, including $u$ as the derivatives of order zero.
The class~$\mathcal L|_{\mathcal S}$ is determined by two objects:
the tuple $L=(L^1,\ldots,L^l)$ of $l$ fixed functions depending on~$x$, $u_{(p)}$ and~$\theta$ and~$\theta$ running through the set~$\mathcal S$.
Within the framework of symmetry analysis of differential equations,
the set~$\mathcal S$ is defined as the set of solutions of an auxiliary system consisting of
a subsystem $S(x,u_{(p)},\theta_{(q)}(x,u_{(p)}))=0$ of differential equations with respect to $\theta$
and a non-vanish condition $\Sigma(x,u_{(p)},\theta_{(q)}(x,u_{(p)}))\ne0$
with another differential function $\Sigma$ of~$\theta$.
In the auxiliary system, $x$ and $u_{(p)}$ play the role of independent variables
and $\theta_{(q)}$ stands for the set of all the partial derivatives of $\theta$ of order not greater than $q$
with respect to the variables $x$ and $u_{(p)}$.
In view of the purpose of our consideration we should have that $\mathcal L^0=\mathcal L_{\theta_0}$ for some $\theta_0\in\mathcal S$.

Following~\cite{Bihlo&Popovych:Popovych&Kunzinger&Eshraghi2010},
for $\theta,\tilde\theta\in\mathcal S$ we denote by $\mathrm T(\theta,\tilde\theta)$ the set of point transformations
which map the system~$\mathcal L_\theta$ to the system~$\mathcal L_{\tilde\theta}$.
The maximal point symmetry group~$G_\theta$ of the system~$\mathcal L_\theta$ coincides with~$\mathrm T(\theta,\theta)$.

\begin{definition}\label{DefOfSetOfAdmTrans}
$\mathrm T(\mathcal L|_{\mathcal S})=\{(\theta,\tilde\theta,\varphi)\mid
\theta,\tilde\theta\in\mathcal S,\,%\mathrm T(\theta,\tilde\theta)\not=\varnothing,\,
\varphi\in\mathrm T(\theta,\tilde\theta)\}$
is called the {\em set of admissible transformations in~$\mathcal L|_{\mathcal S}$}.
\end{definition}

Sets of admissible transformations were first systematically described by King\-ston and Sophocleous
for a class of generalized Burgers equations~\cite{Bihlo&Popovych:Kingston&Sophocleous1991}
and Winternitz and Gazeau for a class of variable coefficient Korteweg--de Vries equations~\cite{Bihlo&Popovych:Winternitz&Gazeau1992},
in terms of {\em form-preserving}
\cite{Bihlo&Popovych:Kingston&Sophocleous1991,Bihlo&Popovych:Kingston&Sophocleous1998,Bihlo&Popovych:Kingston&Sophocleous2001}
and {\em allowed}~\cite{Bihlo&Popovych:Winternitz&Gazeau1992} transformations, respectively.
The notion of admissible transformations can be considered as a formalization of their approaches.

Any point symmetry transformation of an equation~$\mathcal L_\theta$ from the class~$\mathcal L|_{\mathcal S}$ generates an admissible transformation in this class.
Therefore, it obviously satisfies all restrictions which hold for admissible transformations \cite{Bihlo&Popovych:Kingston&Sophocleous1998}.
For example, it is known for a long time that for any point (and even contact) transformation connecting a pair of $(1+1)$-dimensional evolution equations
its component corresponding to~$t$ depends only on~$t$, cf.~\cite{Bihlo&Popovych:Magadeev1993}.
The equations in the pair can also coincide.
As a result, the same restriction should be satisfied by any point or contact symmetry transformation of every $(1+1)$-dimensional evolution equation.

The simplest description of admissible transformations is obtained for normalized classes of differential equations.
Roughly speaking, a class of (systems of) differential equations is called \emph{normalized} if any admissible transformation in this class is induced
by a transformation from its equivalence group.
Different kinds of normalization can be defined depending on what kind of equivalence group
(point, contact, usual, generalized, extended, etc.) is considered.
Thus, the \emph{usual equivalence group}~$G^{\sim}$ of the class~$\mathcal L|_{\mathcal S}$
consists of those point transformations in the space of variables and arbitrary elements,
which are projectable on the variable space and preserve the whole class~$\mathcal L|_{\mathcal S}$.
The class~$\mathcal L|_{\mathcal S}$ is called normalized in the usual sense if
the set $\mathrm T(\mathcal L|_{\mathcal S})$ is generated by the usual equivalence group~$G^{\sim}$.
As a consequence, all generalizations of the equivalence group within the framework of point transformations are trivial for this class.
See~\cite{Bihlo&Popovych:Popovych&Kunzinger&Eshraghi2010} for precise definitions and further explanations.
If the class~$\mathcal L|_{\mathcal S}$ is normalized in certain sense with respect to point transformations,
the point symmetry group~$G_{\theta_0}$ of any equation~$\mathcal L_{\theta_0}$ from this class
is contained in the projection of the corresponding equivalence group of~$\mathcal L|_{\mathcal S}$
to the space of independent and dependent variables (taken for the value $\theta=\theta_0$ in the case when the generalized equivalence group is considered).

As a rule, calculations of certain common restrictions on admissible transformations of the entire normalized class or
its normalized subclasses or point symmetry transformations of a single equation from this class have the same level of complexity.
For example, in order to derive the restriction that the transformation component corresponding to~$t$ depends only on~$t$,
we should carry out approximately the same operations, independently of considering the whole class of $(1+1)$-dimensional evolution equations,
any well-defined subclass from this class or any single evolution equation.
This is why it is worthwhile to first construct nested series of normalized classes of differential equations by
starting from a quite general, obviously normalized class, imposing on each step additional auxiliary conditions on the arbitrary elements
and then studying the complete point symmetries of a single equation from the narrowest class of the constructed series.

In the way outlined above we have already investigated hierarchies of normalized classes of
generalized nonlinear Schr\"odinger equations~\cite{Bihlo&Popovych:Popovych&Kunzinger&Eshraghi2010},
$(1+1)$-dimensional linear evolution equations~\cite{Bihlo&Popovych:Popovych&Kunzinger&Ivanova2008},
$(1+1)$-dimensional third-order evolution equations including variable-coefficient Korteweg--de Vries and modified Korteweg--de Vries equations~\cite{Bihlo&Popovych:Popovych&Vaneeva2010} and
generalized vorticity equations arising in the study of local parameterization schemes for the barotropic vorticity equation~\cite{Bihlo&Popovych:Popovych&Bihlo2010}.

If an equation does not belong to a class whose admissible transformations have been studied earlier,
one can try to map this equation using a point transformation to an equation from a class for which constraints on its admissible transformations are known a priori.
Then one can either map the known constraints on admissible transformations back and then complete the calculations of point symmetries of the initial equation using the direct method or calculate the point symmetry group of the mapped equation using the direct method and then map this group back.
The example on the application of this trick to the barotropic vorticity equation in presented in Section~\ref{Bihlo&Popovych:sec:DirectMethodForBVE}.

\section{Calculations based on Lie invariance algebra\\ of the barotropic vorticity equation}\label{Bihlo&Popovych:sec:CalculationsInvarianceAlgebra}

The barotropic vorticity equation on the $\beta$-plane reads
\begin{align}\label{Bihlo&Popovych:eq:vortbeta}
    \zeta_t+\psi_x\zeta_y-\psi_y\zeta_x+\beta\psi_x=0,
\end{align}
where $\psi=\psi(t,x,y)$ is the stream function and $\zeta:=\psi_{xx}+\psi_{yy}$ is the relative vorticity, which is the vertical component of the vorticity vector. The barotropic vorticity equation in the formulation~\eqref{Bihlo&Popovych:eq:vortbeta} is valid in situations where the two-dimensional wind field can be regarded as almost non-divergent and the motion in North--South direction is confined to a relatively small region. It is then convenient to use a local Cartesian coordinate system. In such a coordinate system, the effect of the sphericity of the Earth is conveniently taken into account by approximating the normal component of the vorticity due to the rotation of the Earth, $2\Omega\sin\varphi$, by its linear Taylor series expansion, where $\Omega$ is the angular rotation of the Earth and $\varphi$ is the geographic latitude. This linear approximation at some reference latitude $\varphi_0$ is given by $2\Omega\sin\varphi_0+\beta y$, where $\beta=2\Omega\cos\varphi/a$ and $a$ is the radius of the Earth. This is the traditional $\beta$-plane approximation, see~\cite{Bihlo&Popovych:Pedlosky1987} for further details. Then, taking the vertical component of the curl of the two-dimensional ideal Euler equations and using the $\beta$-plane approximation leads to Eq.~\eqref{Bihlo&Popovych:eq:vortbeta}.

It is straightforward to determine the maximal Lie invariance algebra~$\mathfrak g$ of Eq.~\eqref{Bihlo&Popovych:eq:vortbeta} using infinitesimal techniques:
\[
 \mathfrak g=\langle\mathcal{D},\partial_t,\partial_y,\mathcal{X}(f),\mathcal{Z}(g)\rangle,
\]
where $\mathcal{D}=t\partial_t-x\partial_x-y\partial_y-3\psi\partial_\psi$, $\mathcal{X}(f)=f(t)\partial_x-f_t(t)y\partial_\psi$ and \mbox{$\mathcal{Z}(g)=g(t)\partial_\psi$},
and $f$ and $g$ run through the space of smooth functions of $t$.
(In fact, the precise interpretation of~$\mathfrak g$ as a Lie algebra strongly depends on what space of smooth functions is chosen for~$f$ and~$g$, cf.\ Note~A.1 in \cite[p.~178]{Bihlo&Popovych:Fushchych&Popovych1994}.)
This result was first obtained in~\cite{Bihlo&Popovych:Katkov1965} and is now easily accessible in the handbook \cite[p.~223]{Bihlo&Popovych:Ibragimov1995}.
See also~\cite{Bihlo&Popovych:Bihlo&Popovych2009a} for related discussions and the exhaustive study of the classical Lie reductions of Eq.~\eqref{Bihlo&Popovych:eq:vortbeta}.

The nonzero commutation relations of the algebra $\mathfrak g$ in the above basis are exhausted by the following ones:
\begin{gather*}
[\partial_t,\mathcal{D}]=\partial_t,\quad [\partial_y,\mathcal{D}]=-\partial_y,\\
[\mathcal{D},\mathcal{X}(f)]=\mathcal{X}(tf_t+f),\quad [\mathcal{D},\mathcal{Z}(g)]=\mathcal{Z}(tg_t+3g),\\
[\partial_t,\mathcal{X}(f)]=\mathcal{X}(f_t),\quad [\partial_t,\mathcal{Z}(g)]=\mathcal{Z}(g_t),\quad [\partial_y,\mathcal{X}(f)]=-\mathcal{Z}(f_t).
\end{gather*}
It is easy to see from the commutation relations that the Lie algebra $\mathfrak g$ is solvable since
\begin{gather*}
\mathfrak g'=[\mathfrak g,\mathfrak g]=\langle\partial_t,\partial_y,\mathcal{X}(f),\mathcal{Z}(g)\rangle, \\
\mathfrak g''=[\mathfrak g',\mathfrak g']=\langle\mathcal{X}(f),\mathcal{Z}(g)\rangle,\\
\mathfrak g'''=[\mathfrak g'',\mathfrak g'']=0.
\end{gather*}
Therefore, the radical~$\mathfrak r$ of~$\mathfrak g$ coincides with the entire algebra~$\mathfrak g$.
The nil-radical of~$\mathfrak g$ is the ideal
\[
\mathfrak n=\langle\partial_y,\mathcal{X}(f),\mathcal{Z}(g)\rangle.
\]
Indeed, this ideal is a nilpotent subalgebra of~$\mathfrak g$ since
\[
\mathfrak n^{(2)}=\mathfrak n'=[\mathfrak n,\mathfrak n]=\langle\mathcal{Z}(g)\rangle, \quad
\mathfrak n^{(3)}=[\mathfrak n,\mathfrak n']=0.
\]
It can be extended to a larger ideal of~$\mathfrak g$ only with two sets of elements, $\{\partial_t\}$ and $\{\mathcal{D},\partial_t\}$.
Both resulting ideals are not nilpotent.
In other words, $\mathfrak n$ is the maximal nilpotent ideal.

Continuous point symmetries of Eq.~\eqref{Bihlo&Popovych:eq:vortbeta} are determined from the elements of $\mathfrak g$ by integration of the associated Cauchy problems. It is obvious that Eq.~\eqref{Bihlo&Popovych:eq:vortbeta} also possesses two discrete symmetries, $(t,x,y,\psi)\mapsto (-t,-x,y,\psi)$ and $(t,x,y,\psi)\mapsto (t,x,-y,-\psi)$, which are independent up to their composition and their compositions with continuous symmetries. The proof that the above symmetries generate the entire point symmetry group was, however, outstanding.

\begin{theorem}\label{Bihlo&Popovych:TheoremOnPointSymGroupOfBVE}
 The complete point symmetry group of the barotropic vorticity equation on the $\beta$-plane~\eqref{Bihlo&Popovych:eq:vortbeta} is formed by the transformations
\begin{align*}
\mathcal T\colon &\quad \tilde t=T_1t+T_0, \quad \tilde x=\frac{1}{T_1}x+f(t), \quad \tilde y=\frac{\varepsilon}{T_1}y+Y_0, \\
  &\quad \tilde\psi=\frac{\varepsilon}{(T_1)^3}\psi-\frac{\varepsilon}{(T_1)^2}f_t(t)y+g(t),
\end{align*}
where $T_1\ne0$, $\varepsilon=\pm1$ and $f$ and $g$ are arbitrary functions of $t$.
\end{theorem}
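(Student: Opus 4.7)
The plan is to use the megaideal technique of Section~\ref{Bihlo&Popovych:sec:TechniquesLieInvarianceAlgebra} to prune the ansatz for an arbitrary point symmetry $\mathcal T\colon(t,x,y,\psi)\mapsto(T,X,Y,\Psi)$ down to a narrow template, and then close the argument by direct substitution into equation~\eqref{Bihlo&Popovych:eq:vortbeta}. By Proposition~\ref{prop:OnMegaIdeals1} together with the explicit computation of the derived and central series already carried out above, the chain $\mathfrak n'\subset\mathfrak g''\subset\mathfrak n\subset\mathfrak g'\subset\mathfrak g$ consists of megaideals of~$\mathfrak g$, each of which must be preserved by the push-forward~$\mathcal T_*$.

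I would exploit these megaideals from the inside out. The innermost member $\mathfrak n'=\langle\mathcal Z(g)\rangle$ is the most restrictive: demanding $\mathcal T_*(g(t)\partial_\psi)\in\mathfrak n'$ for every smooth~$g$ kills the coefficients at $\partial_{\tilde t},\partial_{\tilde x},\partial_{\tilde y}$ in the push-forward, so $T_\psi=X_\psi=Y_\psi=0$, while the resulting consistency condition $g(t)\Psi_\psi=\tilde g(T(t,x,y))$ for arbitrary $g$ forces $T$ to depend only on~$t$ and $\Psi$ to be affine in~$\psi$ with coefficient depending only on~$t$, giving $\Psi=A(t)\psi+B(t,x,y)$. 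Applying the same recipe to $\mathfrak g''=\langle\mathcal X(f),\mathcal Z(g)\rangle$ via $\mathcal T_*\mathcal X(f)$ eliminates the $\partial_{\tilde y}$-component, whence $Y_x=0$, and makes $X_x$ a function of $t$ only, so $X$ is affine in~$x$. The next megaideal $\mathfrak n=\langle\partial_y,\mathcal X(f),\mathcal Z(g)\rangle$ applied to $\mathcal T_*\partial_y$ additionally yields $Y_y$ a nonzero constant and $X_y$ a function of $t$ alone, so $X=\alpha(t)x+\gamma(t)y+\mu(t)$ and $Y=\varepsilon_0 y+\eta(t)$. Finally, $\mathcal T_*\partial_t\in\mathfrak g'$ forbids a $\mathcal D$-component in the push-forward and hence forces $T_t$ to be constant, yielding $T(t)=T_1 t+T_0$ with $T_1\ne 0$.

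What remains is a short direct computation. I would substitute the template into~\eqref{Bihlo&Popovych:eq:vortbeta} rewritten in the tilded variables, expand the derivatives by the chain rule, and split the resulting identity both with respect to derivatives of~$\psi$ and as polynomials in~$x$ and~$y$. Matching the Jacobi-bracket part $\tilde\psi_{\tilde x}\tilde\zeta_{\tilde y}-\tilde\psi_{\tilde y}\tilde\zeta_{\tilde x}$ fixes the mutual scalings and forces $\gamma(t)\equiv 0$, while the non-scale-invariant term $\beta\tilde\psi_{\tilde x}$ rigidly locks the remaining multiplicative constants to~$T_1$, giving $\alpha=1/T_1$, $\varepsilon_0=\varepsilon/T_1$ with $\varepsilon=\pm 1$, $A=\varepsilon/(T_1)^3$, $\eta=Y_0$ constant, and $B(t,x,y)=-\varepsilon(T_1)^{-2}f_t(t)y+g(t)$, where $f:=\mu$ and $g$ are arbitrary smooth functions of~$t$. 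The main obstacle is not conceptual but bookkeeping: after the megaideal preparation the direct step is tractable, but one must track carefully how every scaling constant is coupled to~$T_1$ by the non-scale-invariant $\beta\psi_x$ term, lest a spurious symmetry parameter sneak in.
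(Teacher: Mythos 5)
Your proposal is correct and follows essentially the same route as the paper: the same nested chain of megaideals $\mathfrak n'\subset\mathfrak g''\subset\mathfrak n\subset\mathfrak g'$ exploited through push-forward conditions on $\mathcal T_*$, followed by the direct method as the final arbiter. The only deviation is that you never impose the outermost condition $\mathcal T_*\mathcal D\in\mathfrak g$, which the paper uses to obtain $\Psi_x=0$ and $Y_t=0$ (and hence the tighter template $\Psi=\Psi_1\psi+\Psi_2(t)y+\Psi_4(t)$) before substituting; since those constraints re-emerge from the direct substitution and splitting anyway, this merely shifts some bookkeeping into the final computation without affecting correctness.
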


\begin{proof}
The discrete symmetries of the barotropic vorticity equation on the $\beta$-plane are computed as described in section~\ref{Bihlo&Popovych:sec:TechniquesLieInvarianceAlgebra}. The general form of a point transformation of the vorticity equation is:
\[
 \mathcal T\colon\quad (\tilde t, \tilde x,\tilde y, \tilde \psi)=(T, X, Y, \Psi),
\]
where $T$, $X$, $Y$ and $\Psi$ are regarded as functions of $t$, $x$, $y$ and $\psi$, whose joint Jacobian does not vanish.
To obtain the constrained form of $\mathcal T$, we use the above four proper nested megaideals of~$\mathfrak g$,
namely $\mathfrak n'$, $\mathfrak g''$, $\mathfrak n$ and $\mathfrak g'$, and~$\mathfrak g$ itself.
Recall once more that the transformation $\mathcal T$ must satisfy the conditions
$\mathcal T_* \mathfrak n'=\mathfrak n'$,
$\mathcal T_* \mathfrak g''=\mathfrak g''$,
$\mathcal T_* \mathfrak n=\mathfrak n$,
$\mathcal T_* \mathfrak g'=\mathfrak g'$ and
$\mathcal T_* \mathfrak g=\mathfrak g$
in order to qualify as a point symmetry of the vorticity equation, where $\mathcal T_*$ denotes the push-forward of $\mathcal T$ to vector fields. In other words, we have
\begin{gather}
\mathcal T_* \mathcal{Z}(g)=g(T_\psi\partial_{\tilde t}+X_\psi\partial_{\tilde x}+Y_\psi \partial_{\tilde y}+\Psi_\psi \partial_{\tilde \psi})=\mathcal{\tilde Z} (\tilde g^g),
\label{Bihlo&Popovych:eq:MegaidealConstraintForT1}\\
\mathcal T_* \mathcal{X}(f)=\mathcal{\tilde X}(\tilde f^f)+\mathcal{\tilde Z}(\tilde g^f),
\label{Bihlo&Popovych:eq:MegaidealConstraintForT2}\\
\mathcal T_* \partial_t=T_t\partial_{\tilde t}+X_t\partial_{\tilde x}+Y_t\partial_{\tilde y}+\Psi_t\partial_{\tilde \psi}=a_1\partial_{\tilde t}+a_2\partial_{\tilde y}+\mathcal{\tilde X}(\tilde f)+\mathcal{\tilde Z}(\tilde g),
\label{Bihlo&Popovych:eq:MegaidealConstraintForT3}\\
\mathcal T_* \partial_y=T_y\partial_{\tilde t}+X_y\partial_{\tilde x}+Y_y\partial_{\tilde y}+\Psi_y\partial_{\tilde \psi}=b_1\partial_{\tilde y}+\mathcal{\tilde X}(\tilde f^y)+\mathcal{\tilde Z}(\tilde g^y),
\label{Bihlo&Popovych:eq:MegaidealConstraintForT4}\\
\mathcal T_* \mathcal{D}=c_1\mathcal{\tilde D}+c_2\partial_{\tilde t}+c_3\partial_{\tilde y}+\mathcal{\tilde X}(\tilde f^D)+\mathcal{\tilde Z}(\tilde g^D),
\label{Bihlo&Popovych:eq:MegaidealConstraintForT5}
\end{gather}
where all $\tilde f$'s and $\tilde g$'s are smooth functions of~$\tilde t$ which are determined,
as the constant parameters $a_1$, $a_2$, $b_1$, $c_1$, $c_2$ and~$c_3$, by~$\mathcal T_*$ and the operator from the corresponding left-hand side.

We will derive constraints on~$\mathcal T_*$, consequently equating coefficients of vector fields in conditions \eqref{Bihlo&Popovych:eq:MegaidealConstraintForT1}--\eqref{Bihlo&Popovych:eq:MegaidealConstraintForT5} and taking into account constraints obtained on previous steps.
Thus Eq.~\eqref{Bihlo&Popovych:eq:MegaidealConstraintForT1} immediately implies $T_\psi=X_\psi=Y_\psi=0$ (hence $\Psi_\psi\ne0$) and $g\Psi_\psi=\tilde g^g$.
Evaluating the last equation for $g=1$ and $g=t$ and combining the results gives $t=\tilde g^t(T)/\tilde g^1(T)$, where $\tilde g^t=\tilde g^g|_{g=t}$ and $\tilde g^1=\tilde g^g|_{g=1}$.
As the derivative with respect to $T$ in the right hand side of this equality does not vanish, the condition $T=T(t)$ must hold. This implies that $\Psi_\psi$ depends only on~$t$.

As then $\mathcal T_*\mathcal{X}(f)=fX_x\partial_{\tilde x}+fY_x\partial_{\tilde y}+(f\Psi_x-f_ty\Psi_\psi)\partial_{\tilde \psi}$,
it follows from Eq.~\eqref{Bihlo&Popovych:eq:MegaidealConstraintForT2} that $Y_x=0$ and
\[
fX_x=\tilde f^f,\quad f\Psi_x-f_ty\Psi_\psi=-\tilde f^f_{\tilde t}Y+\tilde g^f.
\]
\looseness=-1
Evaluating the first of the displayed equalities for $f=1$, we derive that $X_x=\tilde f^1(T)=: X_1(t)$.
Therefore, $\tilde f^f(T)=f(t)X_1(t)$. The second equality then reads
\[
 f\Psi_x-f_ty\Psi_\psi=-\frac{(fX_1)_t}{T_t}Y+\tilde g^f.
\]
Setting $f=1$ and $f=t$ in the last equality and combining the resulting equalities yields $y\Psi_\psi=(T_t)^{-1}X_1Y+t\tilde g^1-\tilde g^t$, where $\tilde g^t=\tilde g^f|_{f=t}$ and $\tilde g^1=\tilde g^f|_{f=1}$. As $X_1\ne0$ this equation implies that $Y=Y_1(t)y+Y_0(t)$.

After analyzing Eq.~\eqref{Bihlo&Popovych:eq:MegaidealConstraintForT3}, we find
$T_t=\mathop{\rm const}\nolimits$, $Y_t=\mathop{\rm const}\nolimits$, which leads to $Y_1=\mathop{\rm const}\nolimits$, $X_t=\tilde f(T)$ and thus $X_{tx}=0$, i.e., $X_1=\mathop{\rm const}\nolimits$. Finally, Eq.~\eqref{Bihlo&Popovych:eq:MegaidealConstraintForT3} also implies $\Psi_t=-\tilde f_{\tilde t}Y+\tilde g$.
In a similar manner, upon taking into account the restrictions already derived so far, collecting coefficients in Eq.~\eqref{Bihlo&Popovych:eq:MegaidealConstraintForT4} gives the constraint $X_y=\tilde f^y=:X_2=\mathop{\rm const}\nolimits$ since $X_{yt}=0$. Moreover, $\Psi_y=\tilde g^y$, as $\tilde f^y_{\tilde t}=0$.

The final restrictions on $\mathcal T$ based on the preservation of~$\mathfrak g$ are derivable from Eq.~\eqref{Bihlo&Popovych:eq:MegaidealConstraintForT5}, where
\begin{align*}
\mathcal T_* \mathcal{D}={}&
tT_t\partial_{\tilde t}+(tX_t-xX_x-yX_y)\partial_{\tilde x}+(tY_t-yY_y)\partial_{\tilde y}\\
&\!\!+(t\Psi_t-x\Psi_x-y\Psi_y-3\psi\Psi_\psi)\partial_{\tilde \psi}.
\end{align*}
Collecting the coefficients of $\partial_{\tilde t}$ and $\partial_{\tilde y}$, we obtain that $c_1=1$ and $Y_t=0$. Similarly, equating the coefficients of $\partial_{\tilde\psi}$ and further splitting with respect to $x$ implies that $\Psi_x=0$.

The results obtained so far lead to the following constrained form of the general point symmetry transformation of the vorticity equation~\eqref{Bihlo&Popovych:eq:vortbeta}
\begin{gather}\label{Bihlo&Popovych:eq:RestrictedFormOfPointTransForBVE}
\begin{split}
&T=T_1t+T_0, \quad X=X_1x+X_2y+f(t), \quad Y=Y_1y+Y_0, \\
&\Psi=\Psi_1\psi+\Psi_2(t)y+\Psi_4(t),
\end{split}
\end{gather}
where $T_0$, $T_1$, $X_1$, $X_2$, $Y_0$, $Y_1$ and $\Psi_1$ are arbitrary constants, $T_1X_1Y_1\Psi_1\ne0$, and $f(t)$, $\Psi_2(t)$ and $\Psi_4(t)$ are arbitrary time-dependent functions.
The form~\eqref{Bihlo&Popovych:eq:RestrictedFormOfPointTransForBVE} takes into account all constraints on point symmetries of~\eqref{Bihlo&Popovych:eq:vortbeta}, which follow from the preservation of the maximal Lie invariance algebra~$\mathfrak g$ by the associated push-forward of vector fields.

Now the direct method should be applied.
We carry out a transformation of the form~\eqref{Bihlo&Popovych:eq:RestrictedFormOfPointTransForBVE} in the vorticity equation.
For this aim, we calculate the transformation rules for the partial derivative operators:
\[
 \partial_{\tilde t}=\frac{1}{T_1}\left(\partial_t-\frac{f_t}{X_1}\partial_x\right), \quad\partial_{\tilde x}=\frac{1}{X_1}\partial_x, \quad \partial_{\tilde y}= \frac{1}{Y_1}\left(\partial_y-\frac{X_2}{X_1}\partial_x\right).
\]

Further restrictions on $\mathcal T$ can be imposed upon noting that the term $\psi_{txy}$ can only arise in the expression for $\tilde \psi_{\tilde t \tilde y\tilde y}$, which is
\[
 \tilde \psi_{\tilde t\tilde y\tilde y}=-\frac{2\Psi_1}{T_1Y_1}\frac{X_2}{X_1}\psi_{txy}+\cdots.
\]
This obviously implies that $X_2=0$. In a similar fashion, the expression for $\tilde \zeta_{\tilde t}$ is
\[
 \tilde \zeta_{\tilde t}=\frac{\Psi_1}{T_1}\left(\frac{1}{(X_1)^2}\zeta_t+\left(\frac{1}{(Y_1)^2}-\frac{1}{(X_1)^2}\right)\psi_{yyt}\right)+\cdots,
\]
upon using $\psi_{xxt}=\zeta_t-\psi_{yyt}$. Hence $(X_1)^2=(Y_1)^2$ as there are no other terms with $\psi_{yyt}$ in the invariance condition. After taking into account these two more restrictions on $\mathcal T$, it is straightforward to expand the transformed version of the vorticity equation. This yields
\begin{align*}
  &\frac{\Psi_1}{T_1(X_1)^2}\zeta_t-\frac{f_t\Psi_1}{T_1(X_1)^3}\zeta_x+\frac{(\Psi_1)^2}{(X_1)^3Y_1}\psi_x\zeta_y-\left(\frac{\Psi_1}{Y_1}\psi_y+ \frac{\Psi_2}{Y_1}\right)\frac{\Psi_1}{(X_1)^3}\zeta_x {} \\
 & {}+\beta\frac{\Psi_1}{X_1}\psi_x=\frac{\Psi_1}{T_1(X_1)^2}\left(\zeta_t+\psi_x\zeta_y-\psi_y\zeta_x+\beta \psi_x\right).
\end{align*}
The invariance condition is fulfilled provided that the constraints
\[
 \Psi_2=-\frac{Y_1}{T_1}f_t, \quad X_1=T_1(X_1)^2, \quad \frac{(\Psi_1)^2}{(X_1)^3Y_1}=\frac{\Psi_1}{T_1(X_1)^2}.
\]
hold. This completes the proof of the theorem.
\end{proof}

\begin{corollary}
 The barotropic vorticity equation on the $\beta$-plane possesses only two independent discrete point symmetries, which are given by
\[
 \Gamma_1\colon (t,x,y,\psi)\mapsto (-t,-x,y,\psi), \quad  \Gamma_2\colon (t,x,y,\psi)\mapsto (t,x,-y,-\psi).
\]
They generate the group of discrete symmetry transformations of the barotropic vorticity equation on the $\beta$-plane, which is isomorphic to $\mathbb Z^2\times\mathbb Z^2$, where $\mathbb Z^2$ denotes the cyclic group of two elements.
\end{corollary}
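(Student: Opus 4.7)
The plan is to read off the discrete part of the symmetry group directly from the parametric description in Theorem~\ref{Bihlo&Popovych:TheoremOnPointSymGroupOfBVE}. First I will identify the identity component~$G^\circ$ of the complete point symmetry group~$G$. In the notation of that theorem, the constants~$T_0,Y_0$ and the functional parameters~$f,g$ range over connected sets (the real line and a convex space of smooth functions) and can be continuously deformed to zero, while the constraint $T_1\ne0$ means that~$T_1$ lives in a two-component space distinguished by $\mathop{\rm sgn}\nolimits T_1$, and $\varepsilon\in\{-1,+1\}$ contributes another $\mathbb{Z}_2$. Thus $G^\circ$ consists precisely of those transformations with $T_1>0$ and $\varepsilon=+1$, and the quotient $G/G^\circ$ is a group of order four indexed by the pair $(\mathop{\rm sgn}\nolimits T_1,\varepsilon)$.

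Next I will exhibit $\Gamma_1$ and $\Gamma_2$ as coset representatives. Substituting $T_0=Y_0=0$, $f\equiv g\equiv 0$ and $(T_1,\varepsilon)=(-1,-1)$ in the transformation formulas from the theorem yields exactly~$\Gamma_1$, while $(T_1,\varepsilon)=(1,-1)$ yields~$\Gamma_2$. Their composition $\Gamma_1\Gamma_2$ corresponds to $(T_1,\varepsilon)=(-1,+1)$ and acts as $(t,x,y,\psi)\mapsto(-t,-x,-y,-\psi)$, so together with the identity these four transformations enumerate the four cosets of~$G^\circ$ in~$G$. A direct check shows that $\Gamma_1^{\,2}=\Gamma_2^{\,2}=\mathrm{id}$ and that $\Gamma_1$ and $\Gamma_2$ commute, hence the subgroup they generate is the Klein four-group $\mathbb{Z}_2\times\mathbb{Z}_2$.

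Independence---meaning that no nontrivial word in $\Gamma_1,\Gamma_2$ can be absorbed into the continuous part---then follows because each such word corresponds to a nonidentity coset of~$G^\circ$. There is no genuinely difficult step once Theorem~\ref{Bihlo&Popovych:TheoremOnPointSymGroupOfBVE} is in hand; the only point requiring a little care, and the closest thing to an obstacle, is the bookkeeping of the connected components of the parameter space. In particular, one must note that the function spaces in which $f$ and $g$ take values are convex (hence connected) and contribute no discrete components, so that all discreteness originates from the two sign choices in~$T_1$ and~$\varepsilon$.
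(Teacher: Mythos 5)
Your proposal is correct and follows essentially the same route as the paper: the corollary is read off directly from Theorem~\ref{Bihlo&Popovych:TheoremOnPointSymGroupOfBVE}, with all discreteness residing in the two sign parameters $\mathop{\rm sgn}T_1$ and $\varepsilon$, whose componentwise multiplication under composition yields the factor group $\mathbb Z_2\times\mathbb Z_2$ with coset representatives $\mathrm{id}$, $\Gamma_1$, $\Gamma_2$, $\Gamma_1\Gamma_2$. Your explicit identification of the identity component and the connectedness bookkeeping for the parameters $T_0$, $Y_0$, $f$, $g$ merely spells out details the paper leaves implicit.
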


\section{Direct method and admissible transformations\\ of classes of generalized vorticity equations}\label{Bihlo&Popovych:sec:DirectMethodForBVE}

The construction of the complete point symmetry group~$G$ of the barotropic vorticity equation~\eqref{Bihlo&Popovych:eq:vortbeta}
by means of using only the direct method involves cumbersome and sophisticated calculations.
As Eq.~\eqref{Bihlo&Popovych:eq:vortbeta} is a third-order PDE in three independent variables,
the system of determining equations for transformations from~$G$ is an overdetermined nonlinear system of PDEs in four independent variables,
which should be solved by taking into account the nonsingularity condition of the point transformations.
This is an extremely challenging task.
Fortunately, a hierarchy of normalized classes of generalized vorticity equations
was recently constructed~\cite{Bihlo&Popovych:Popovych&Bihlo2010} that allows us to strongly simplify the whole investigation.
Eq.~\eqref{Bihlo&Popovych:eq:vortbeta} belongs only to the narrowest class of this hierarchy,
which is quite wide and consists of equations of the general form
\begin{gather}\label{Bihlo&Popovych:eq:class1}
\zeta_t=F(t,x,y,\psi,\psi_x,\psi_y,\zeta,\zeta_x,\zeta_y,\zeta_{xx},\zeta_{xy}, \zeta_{yy}), \quad \zeta:=\psi_{xx}+\psi_{yy},
\end{gather}
where $(F_{\zeta_x},F_{\zeta_y},F_{\zeta_{xx}},F_{\zeta_{xy}},F_{\zeta_{yy}})\ne(0,0,0,0,0)$.
The equivalence group~$G^\sim_1$ of this class is formed by the transformations
\begin{gather*}
\tilde t=T(t), \quad \tilde x=Z^1(t,x,y), \quad \tilde y=Z^2(t,x,y), \quad \tilde\psi=\Upsilon(t)\psi+\Phi(t,x,y), \\
\tilde F=\frac1{T_t}\left(
\frac\Upsilon LF+\Bigl(\frac\Upsilon L\Bigr)_0\zeta+\Bigl(\frac{\Phi_{ii}}L\Bigr)_0
-\frac{Z^i_tZ^i_j}L\left(\frac\Upsilon L\zeta_j+\Bigl(\frac\Upsilon L\Bigr)_j\zeta+\Bigl(\frac{\Phi_{ii}}L\Bigr)_j\right)
\right),
\end{gather*}
where $T$, $Z^i$, $\Upsilon$ and $\Phi$ are arbitrary smooth functions of their arguments,
satisfying the conditions $Z^1_kZ^2_k=0$, $Z^1_kZ^1_k=Z^2_kZ^2_k:=L$ and $T_t\Upsilon L\ne0$.
The subscripts~1 and~2 denote differentiation with respect to~$x$ and~$y$, respectively,
the indices~$i$ and~$j$ run through the set $\{1,2\}$ and the summation over repeated indices is understood.
As Eq.~\eqref{Bihlo&Popovych:eq:vortbeta} is an element of the class~\eqref{Bihlo&Popovych:eq:class1} and this class is normalized,
the point symmetry group~$G$ of Eq.~\eqref{Bihlo&Popovych:eq:vortbeta} is contained in the projection~$\hat G^\sim_1$ of the equivalence group~$G^\sim_1$
of the class~\eqref{Bihlo&Popovych:eq:class1} to the variable space $(t,x,y,\psi)$.
At the same time, the group~$G$ is much narrower than the group~$\hat G^\sim_1$,
and in order to single out~$G$ from~$\hat G^\sim_1$ we should still derive and solve a quite cumbersome system of additional constraints.
Instead of this we use the trick described in the end of Section~\ref{Bihlo&Popovych:sec:DirectMethodAndAdmTrans}. Namely, by the transformation
\begin{equation}\label{Bihlo&Popovych:eq:TrickTransOfPsi}
\check\psi=\psi+\frac\beta6y^3,
\end{equation}
which identically acts on the independent variables and which is prolonged to the vorticity according to the formula~$\check\zeta=\zeta+\beta y$,
we map Eq.~\eqref{Bihlo&Popovych:eq:vortbeta} to the equation
\begin{align}\label{Bihlo&Popovych:eq:vortbetaMod}
\check\zeta_t+\check\psi_x\check\zeta_y-\check\psi_y\check\zeta_x=-\frac\beta2y^2\check\zeta_x.
\end{align}
Eq.~\eqref{Bihlo&Popovych:eq:vortbetaMod} belongs to the subclass of class~\eqref{Bihlo&Popovych:eq:class1} that is
singled out by the constraints $F_\psi=0$, $F_\zeta=0$, $F_{\psi_x}=-\zeta_y$ and $F_{\psi_y}=\zeta_x$,
i.e., the class consisting of the equations of the form
\begin{equation}\label{Bihlo&Popovych:eq:class2}
   \zeta_t+\psi_x\zeta_y-\psi_y\zeta_x=H(t,x,y,\zeta_x,\zeta_y,\zeta_{xx},\zeta_{xy}, \zeta_{yy}), \quad \zeta :=\psi_{xx}+\psi_{yy},
\end{equation}
where $H$ is an arbitrary smooth function of its arguments, which is assumed as an arbitrary element instead of $F=H-\psi_x\zeta_y+\psi_y\zeta_x$.
The class~\eqref{Bihlo&Popovych:eq:class2} also is a member of the above hierarchy of normalized classes.
Its equivalence group~$G^\sim_2$ is much narrower than~$G^\sim_1$ and is formed by the transformations
\[%\label{EqTransFromGequiv2}
\begin{split}
&\tilde t=\tau, \quad
\tilde x=\lambda(x\mathfrak c-y\mathfrak s)+\gamma^1, \quad
\varepsilon\tilde y=\lambda(x\mathfrak s+y\mathfrak c)+\gamma^2, \\
&\tilde\psi=\varepsilon\frac{\lambda}{\tau_t}\left(\lambda\psi+\frac\lambda2\theta_t(x^2{+}y^2)
-\gamma^1_t(x\mathfrak s{+}y\mathfrak c)+\gamma^2_t(x\mathfrak c{-}y\mathfrak s)\right)+\delta+\frac\sigma2(x^2{+}y^2),\\
&\tilde H=\frac\varepsilon{\tau_t{}^2}
\left(H-\frac{\lambda_t}\lambda(x\zeta_x+y\zeta_y)+2\theta_{tt}\right)
-\frac{\delta_y{+}\sigma y}{\tau_t\lambda^2}\zeta_x+\frac{\delta_x{+}\sigma x}{\tau_t\lambda^2}\zeta_y
+\frac2{\tau_t}\left(\frac\sigma{\lambda^2}\right)_t,
\end{split}
\]
where $\varepsilon=\pm1$, $\mathfrak c=\cos\theta$, $\mathfrak s=\sin\theta$;
$\tau$, $\lambda$, $\theta$, $\gamma^i$ and $\sigma$ are arbitrary smooth functions of~$t$
satisfying the conditions $\lambda>0$, $\tau_{tt}=0$ and $\tau_t\ne0$ and
$\delta=\delta(t,x,y)$ runs through the set of solutions of the Laplace equation $\delta_{xx}+\delta_{yy}=0$.

In order to derive the additional constraints that are satisfied by the group parameters of transformations from
the point symmetry group~$G_2$ of Eq.~\eqref{Bihlo&Popovych:eq:vortbetaMod}, we substitute the values
$H=-\beta y^2\zeta_x/2$ and~$\tilde H=-\beta\tilde y^2\tilde \zeta_{\tilde x}/2$ as well as
expressions for the transformed variables and derivatives via the initial ones into the transformation component for~$H$
and then make all possible splitting in the obtained equality.
As a result, we derive the additional constraints
\[
\theta=\gamma^2_t=0, \quad \lambda=\frac1{\tau_t}, \quad \sigma=\frac{\varepsilon\beta\gamma^2}{2\tau_t{}^2}, \quad
\delta_x=-\sigma x, \quad \delta_y=\sigma y+\frac{\varepsilon\beta(\gamma^2)^2}{2\tau_t}.
\]
After projecting transformations from~$G^\sim_2$ on the variable space $(t,x,y,\psi)$,
constraining the group parameters using the above conditions
and taking the adjoint action of the inverse of the transformation~\eqref{Bihlo&Popovych:eq:TrickTransOfPsi},
we obtain, up to re-denoting, the transformations from Theorem~\ref{Bihlo&Popovych:TheoremOnPointSymGroupOfBVE}.

\section{Conclusion}\label{Bihlo&Popovych:sec:Conclusion}

In this paper, we have computed the complete point symmetry group of the barotropic vorticity equation on the $\beta$-plane. It is obvious that both of the techniques presented in this paper are applicable to general systems of differential equations.

Despite of the apparent simplicity of the techniques employed above, there are a number of features that should be discussed properly. In particular, the relation between discrete symmetries of a differential equation and discrete automorphisms of the corresponding maximal Lie invariance algebra is neither injective nor surjective. This is why it can be misleading to restrict the consideration to discrete automorphism when trying to finding discrete symmetries. This and related issues will be investigated and discussed more thoroughly in a forthcoming work.

\subsection*{Acknowledgements}

AB is a recipient of a DOC-fellowship of the Austrian Academy of Sciences. The research of ROP was supported by the project P20632 of the Austrian Science Fund.

\LastPageEnding


\begin{thebibliography}{99}
\footnotesize


\bibitem{Bihlo&Popovych:Andreev&Kaptsov&Pukhnachov&Rodionov1998}
Andreev V.K., Kaptsov O.V., Pukhnachov V.V.\ and Rodionov A.A.,
{\it Applications of group-theoretical methods in hydrodynamics},
Kluwer, Dordrecht, 1998.

\bibitem{Bihlo&Popovych:Bihlo&Popovych2009a}
Bihlo~A.\ and Popovych~R.O.,
Lie symmetries and exact solutions of the barotropic vorticity equation,
{\it J. Math. Phys.}, 2009, V.50, 123102, 12pp, arXiv:0902.4099.

\bibitem{Bihlo&Popovych:Bihlo&Popovych2009b}
Bihlo~A.\ and Popovych~R.O.,
Symmetry analysis of barotropic potential vorticity equation,
{\it Comm. Theor. Phys.}, 2009, V.52, 697--700, arXiv:0811.3008.

\bibitem{Bihlo&Popovych:Bluman&Cheviakov&Anco2010}
Bluman G.W., Cheviakov A.F.\ and Anco S.C.,
{\it Application of symmetry methods to partial differential equations},
Springer, New York, 2010.

\bibitem{Bihlo&Popovych:Bluman&Kumei1989}
Bluman G.W.\ and Kumei S.,
{\it Symmetries and differential equations},
Springer, New York, 1989.

\bibitem{Bihlo&Popovych:Butcher&Carminati&Vu2003}
Butcher J., Carminati J.\ and Vu K.T.,
A comparative study of some computer algebra packages which determine the Lie point symmetries of differential equations,
{\it Comput. Phys. Comm.} 2003, V.155, 92--114.

\bibitem{Bihlo&Popovych:Carminati&Khai2000}
Carminati J.\ and Vu K.,
Symbolic computation and differential equations: Lie symmetries,
{\it J. Symb. Comput.}, 2000, V.29, 95--116.

\bibitem{Bihlo&Popovych:Fushchych&Popovych1994}
Fushchych W.I.\ and Popovych R.O.,
Symmetry reduction and exact solutions of the Navier--Stokes equations,
{\it J. Nonl. Math. Phys.}, 1994, V.1, 75--113, 156--188, arXiv:math-ph/0207016.

\bibitem{Bihlo&Popovych:Head1993}
Head A.K.,
LIE, a PC program for Lie analysis of differential equations,
{\it Comput. Phys. Comm.}, 1993, V.77, 241--248, (see also http://www.cmst.csiro.au/LIE/LIE.htm).

\bibitem{Bihlo&Popovych:Hereman1997}
Hereman W.,
Review of symbolic software for Lie symmetry analysis. Algorithms and software for symbolic analysis of nonlinear systems,
{\it Math. Comput. Modelling}, 1997, V.25, no.~8--9, 115--132.

%\bibitem{Bihlo&Popovych:Holton2004}
%Holton J.R.,
%{\it An introduction to dynamic meteorology},
%Acad. Press, Amsterdam, 2004.

\bibitem{Bihlo&Popovych:Hydon2000}
Hydon P.E.,
How to construct the discrete symmetries of partial differential equations,
{\it Eur. J. Appl. Math.}, 2000, V.11, 515--527.

\bibitem{Bihlo&Popovych:Ibragimov1995}
Ibragimov N.H., Aksenov A.V., Baikov V.A., Chugunov V.A., Gazizov R.K. and Meshkov A.G.,
{\it CRC handbook of Lie group analysis of differential equations.
Vol. 2. Applications in engineering and physical sciences}, edited by N.~H.~Ibragimov,
CRC Press, Boca Raton, 1995.

\bibitem{Bihlo&Popovych:Katkov1965}
Katkov V.L.,
A class of exact solutions of the equation for the forecast of the geopotential,
{\it Izv. Acad. Sci. USSR Atmospher. Ocean. Phys.}, 1965, V.1, 630--631.

\bibitem{Bihlo&Popovych:Kingston&Sophocleous1991}
Kingston J.G.\ and Sophocleous C.,
On point transformations of a generalised Burgers equation,
{\it Phys. Lett.~A}, 1991, V.155,  15--19.

\bibitem{Bihlo&Popovych:Kingston&Sophocleous1998}
Kingston J.G. and Sophocleous C.,
On form-preserving point transformations of partial differential equations,
{\it J. Phys.~A}, 1998, V.31, 1597--1619.

\bibitem{Bihlo&Popovych:Kingston&Sophocleous2001}
Kingston J.G.\ and Sophocleous C.,
Symmetries and form-preserving transformations of one-dimensional wave equations with dissipation,
{\it Int. J. Non-Lin. Mech.}, 2001, V.36, 987--997.

\bibitem{Bihlo&Popovych:Magadeev1993}
Magadeev B.A.,
On group classification of nonlinear evolution equations,
{\it Algebra i Analiz}, 1993, V.5, 141--156 (in Russian);
English translation in {\it St. Petersburg Math. J.} 1994, V.5, 345--359.

\bibitem{Bihlo&Popovych:Meleshko2004}
Meleshko S.V.,
A particular class of partially invariant solutions of the Navier--Stokes equations,
{\it Nonlin. Dyn.}, 2004, V.36, 47--68.

\bibitem{Bihlo&Popovych:Meleshko2005}
Meleshko S.V.,
{\it Methods for constructing exact solutions of partial differential equations},
Mathematical and analytical techniques with applications to engineering,
Springer, New York, 2005.

\bibitem{Bihlo&Popovych:Olver2000}
Olver P.J.,
{\it Application of Lie groups to differential equations},
Springer, New York, 2000.

\bibitem{Bihlo&Popovych:Ovsiannikov1982}
Ovsiannikov L.V.,
{\it Group analysis of differential equations},
Acad. Press, San Diego, 1982.

\bibitem{Bihlo&Popovych:Pedlosky1987}
Pedlosky J.,
{\it Geophysical fluid dynamics},
Springer, New York, 1987.

\bibitem{Bihlo&Popovych:Popovych&Bihlo2010}
Popovych R.O.\ and Bihlo A.,
Symmetry preserving parameterization schemes,
{\it in preparation}, 2010.

\bibitem{Bihlo&Popovych:Popovych&Boyko&Nesterenko&Lutfullin2005}
Popovych R.O., Boyko V.M., Nesterenko M.O., and Lutfullin M.W.,
Realizations of real low-dimensional Lie algebras,
arXiv:math-ph/0301029v7, 2005, 39pp,
(extended and revised version of paper {\it J. Phys. A}, 2003, V.36, 7337--7360).

\bibitem{Bihlo&Popovych:Popovych&Kunzinger&Eshraghi2010}
Popovych R.O., Kunzinger M., and Eshraghi H.,
Admissible transformations and normalized classes of nonlinear Schr\"{o}dinger equations,
{\it Acta Appl. Math.}, 2010, V.109, 315--359, arXiv:math-ph/0611061.

\bibitem{Bihlo&Popovych:Popovych&Kunzinger&Ivanova2008}
Popovych R.O., Kunzinger M. and Ivanova N.M.,
Conservation laws and potential symmetries of linear parabolic equations,
{\it Acta Appl. Math.}, 2008, V.100, 113--185, arXiv:0706.0443.

\bibitem{Bihlo&Popovych:Popovych&Vaneeva2010}
Popovych R.O. and Vaneeva O.O.,
More common errors in finding exact solutions of nonlinear differential equations. I,
{\it Commun. Nonlinear Sci. Numer. Simul.}, 2010, V.15, 3887--3899, arXiv:0911.1848.

\bibitem{Bihlo&Popovych:Prokhorova2005}
Prokhorova M.,
The structure of the category of parabolic equations, 2005, arXiv:math.AP/0512094, 24pp.

\bibitem{Bihlo&Popovych:RochaFilho&Figueiredo2010}
Rocha~Filho T.M.\ and Figueiredo A.,
[SADE] A {M}aple package for the symmetry analysis of differential equations,
2010, arXiv:1004.3339, 35pp.

\bibitem{Bihlo&Popovych:Winternitz&Gazeau1992}
Winternitz P.\ and Gazeau J.P.,
Allowed transformations and symmetry classes of variable coefficient Korteweg-de Vries equations,
{\it Phys. Lett. A}, 1992, V.167, 246--250.


\end{thebibliography}
\end{document}